\title{\LARGE \bf A Game-Theoretic Approach to Robust Fusion \\
and  Kalman Filtering Under Unknown Correlations  }
\author{Spyridon Leonardos$^{1}$ and Kostas Daniilidis$^{2}$
\thanks{$^{1,2}$The authors are with the  Department of Computer and Information Science, University of Pennsylvania
        Philadelphia, PA 19104, USA
        {\tt\small \{spyridon,kostas\}@seas.upenn.edu}}
}
\newcommand*{\QEDA}{\hfill\ensuremath{\blacksquare}}%
\newcommand{\inAppendix}{in  Appendix}
\newcommand{\inSection}{in Section }
\newcommand{\todo}[1]{\textcolor{red}{TODO: #1}}
\newcommand{\notion}[1]{\textit{#1}}
\newtheorem{thm}{Theorem}[section]
\newtheorem{definition}[thm]{Definition}
\newtheorem{lemma}[thm]{Lemma}
\DeclareMathOperator*{\trace}{tr}
\newcommand{\mean}[1]{\overline{#1}}
\newcommand{\covs}[1]{\operatorname{Cov}\left( {#1} \right)}
\newcommand{\normal}{\mathcal{N}}
\newcommand{\xx}{{xx}}
\newcommand{\yy}{{yy}}
\newcommand{\xy}{{xy}}
\newcommand{\Sxy}{\Sigma_{xy}}
\newcommand{\Sxx}{\Sigma_{xx}}
\newcommand{\Syy}{\Sigma_{yy}}
\newcommand{\oSxx}{\widetilde{\Sigma}_{xx}}
\newcommand{\oSxy}{\widetilde{\Sigma}_{xy}}
\newcommand{\oSyy}{\widetilde{\Sigma}_{yy}}
\newcommand{\oSzz}{\widetilde{\Sigma}_{zz}}
\newcommand{\Szz}{\Sigma_{zz}}
\newcommand{\mapdef}[3]{#1 : #2 \rightarrow #3}
\newcommand{\reals}{\mathbb{R}}
\newcommand{\posdefcone}[1]{{S}^{#1}_+}
\newcommand{\opt}{^{\star}}
\newcommand{\norm}[1]{\lVert#1\rVert}
\algnewcommand\algorithmicinput{\textbf{INPUT:}}
\pgfplotsset{compat=newest}
\begin{document}

\maketitle
\thispagestyle{empty}
\pagestyle{empty}

\begin{abstract}

This work addresses the problem of fusing two random vectors with unknown cross-correlations.
We present a formulation and a numerical method for computing the optimal estimate in the minimax sense.
We extend our formulation to linear measurement models that depend on two random vectors with unknown cross-correlations.
As an application we consider the problem of decentralized state estimation for a group of agents.
The proposed estimator takes cross-correlations into account while being less conservative than the widely used Covariance Intersection.
We demonstrate the superiority of the proposed method compared to Covariance Intersection with numerical examples and  simulations
within the specific application of decentralized state estimation using relative position measurements.

\end{abstract}

\section{INTRODUCTION}

State estimation is one of the fundamentals problem in control theory and robotics. 
The most common state estimators are undoubtedly the Kalman filter \cite{kalman1960new}, which is optimal  in the minimum mean squared error for the case of linear systems, 
and its generalizations for nonlinear systems: the Extended Kalman Filter (EKF) \cite{sorenson1985kalman} and the Uscented Kalman Filter (UKF) \cite{julier1997new}.

In multi-agent systems, the task of state estimation takes a collaborative form in the sense that it involves inter-agent measurements and constraints.
Examples are cooperative localization in robotics \cite{Roumeliotis2002distributed} using relative pose measurements, 
camera network localization using epipolar constraints \cite{tron2014distributed} and many more.
On the one hand, a decentralized solution that scales with the number of agents is necessary.
On the other hand, the state estimates become highly correlated as information flows through the network.
Ignoring these correlations has grave consequences: estimates become optimistic and result in divergence of the estimator.
This phenomenon is analogous to the rumor spreading in social networks.

Unknown correlations may be present in other scenarios as well.
A popular simplification,  that significantly reduces computations and enables the use of  EKF-based estimators,  is that noise sources are independent. 
For instance, a common assumption in vision-aided inertial navigation is independence of the  image projection noises for each landmark \cite{mourikis2007multi,HernandezICRA15} although in reality, they are coupled with the motion of the camera sensor. 
In other cases, it might be impractical to store the entire covariance matrix due to storage limitations.
For instance, in Simultaneous Localization and Mapping (SLAM) problems, there have been several approaches that 
decouple the sensor state estimate  from the estimates of the landmark positions \cite{julier2007using,mourikis2007multi} to increase the efficiency of the estimator and reduce the storage requirements.

The most popular algorithm for fusion under the presence of unknown correlations is the Covariance Intersection (CI) method which was introduced by Julier and Uhlmann \cite{julier1997non}. In its simplest form, the Covariance Intersection algorithm is designed to fuse  two random vectors whose correlation is not known by forming a convex combination of the two estimates in the information space.
 Covariance Intersection produces estimates that are provably consistent, in the sense that estimated error covariance is an upper bound of the true error covariance.
However, it has been observed \cite{arambel2001covariance,xu2001application,nerurkar2007power} that Covariance Intersection produces estimates that are too conservative   
which may decrease the accuracy and convergence speed of the overall estimator when used as a component of an online estimator.

One of the most prominent applications of the proposed fusion algorithm is distributed state estimation in an EKF-based framework.
However, the problem of distributed state estimation is far from new.
There have been numerous approaches for EKF-based distributed state estimation and  EKF-based cooperative localization. 
Yet, some of them require that each agent maintains the state of the entire network \cite{Roumeliotis2002distributed,arambel2001covariance}, which is impractical and does not scale with the number of agents,  while others ignore correlations \cite{panzieri2006multirobot,martinelli2007improving} in order to simplify the estimation process or  use Covariance Intersection and variations of it \cite{li2013cooperative,carrillo2013decentralized} despite its slow convergence.

The contributions of this work are summarized as follows. First of all, we propose a method for fusion of two random vectors with unknown cross-correlations.
The proposed approach is less conservative than the widely used Covariance Intersection (CI) while taking cross-correlations into account.
Second of all, we extend our formulation  for the case of a linear measurement model.
Finally, we present numerical examples and simulations  in a distributed state estimation scenario which demonstrate the validity and comparative performance of the proposed approach  compared with the Covariance Intersection.

The paper is structured as follows: \inSection  \ref{sec:problemfrom} we include definitions of consistency and related notions and  we introduce the problem at hand.  The game-theoretic approach to fusing two random variables with unknown correlations is the topic of Section  \ref{sec:robustfusion} which is generalized for arbitrary linear measurement models \inSection \ref{sec:robustfusion}. In Section \ref{sec:numerical_opt} we include details on the implemented numerical algorithm. Numerical examples and simulation results are presented in Sections \ref{sec:numericexmaples} and \ref{sec:simulations} respectively.


\section{PROBLEM FORMALIZATION}

\label{sec:problemfrom}

In this section, we formalize the problem at hand.
First, we need a precise definition of \notion{consistency}.
 \begin{definition}[Consistency \cite{julier1997non}]
 Let $z$ be a  random vector with expectation $E[z]=\mean{z}$. An estimate $\widetilde{z}$ of  $\mean{z}$ is another random vector. The associated error covariance  is denoted $ \oSzz \doteq \covs{\widetilde{z}-\mean{z}}$. The pair $(\widetilde{z},\Szz)$  is  \textit{consistent} if  $E[\widetilde{z}] = \mean{z}$ and
\begin{equation}
\Szz \succeq  \oSzz 
\end{equation}
\end{definition}
 
\medskip

 \problemstm[Consistent fusion]{Given two consistent estimates $(\widetilde{x},\Sxx)$, $(\widetilde{y},\Syy)$ of $\mean{z}$, where  $\Sxx,\Syy$ are known upper bounds on the true error covariances.  
The problem at hand consists of fusing the two consistent estimates $(\widetilde{x},\Sxx)$, $(\widetilde{y},\Syy)$ in a single consistent estimate   $(\widetilde{z},\Szz)$, where $\widetilde{z}$ is of the form
\begin{equation}
\label{eq:weightedcomb}
\widetilde{z} = W_x \widetilde{x} + W_y \widetilde{y}
\end{equation}
with $W_x + W_y = I$ in order to preserve the mean.  
}

The most widely used solution of the above problem is the Covariance Intersection algorithm \cite{julier1997non}.
Given  upper bounds $\Sxx \succeq \oSxx$, $\Syy \succeq \oSyy$ the Covariance Intersection  equations read
\begin{equation}
\label{eq:cibasic}
\begin{split}
 \widetilde{z} &= \Szz \left \lbrace  \omega  \Sxx^{-1}\widetilde{x} + (1-\omega)  \Syy^{-1}\widetilde{y}  \right \rbrace \\
 \Szz^{-1} &= \omega  \Sxx^{-1} + (1-\omega)  \Syy^{-1}
\end{split}
 \end{equation}
where $\omega \in [0,1]$.
It can be immediately seen that $\Szz \left \lbrace  \omega  \Sxx^{-1}  + (1-\omega)  \Syy^{-1} \right \rbrace = I$ which implies $E[\widetilde{z}] = \mean{z}$.
Moreover, it is easy to check that $( \widetilde{z},\Szz)$ is consistent.  
The above can be easily generalized  for the case of more than 2 random variables, for partial measurements and for the linear measurement model we consider in Section \ref{sec:rkf}. Usually, $\omega$ is chosen such that either $\trace(\Szz)$  or $\log \det (\Szz^{-1})$ is minimized.

Next, we introduce a notion related to consistency but with relaxed requirements. 
Let $\posdefcone{n}$ denote the positive semidefinite cone, that is the set of $n \times n$ positive semidefinite matrices.
First, recall that a function $\mapdef{f}{\posdefcone{n}}{\reals}$ is called \notion{$\posdefcone{n}$-nondecreasing} \cite{boyd2004convex} if 
 \begin{equation}
  X \succeq Y \Rightarrow f(X) \geq f(Y)
 \end{equation}
 for any $X,Y \in \posdefcone{n}$. An example of such a function is $f(X)= \trace(X)$. Now, we are ready to introduce the notion of \notion{consistency with respect to a $\posdefcone{n}$-nondecreasing function}.
 \medskip
 
  \begin{definition}[$f$-Consistency]
Let $\mapdef{f}{\posdefcone{n}}{\reals}$ be a nondecreasing function (with respect to $\posdefcone{n}$) satisfying $f(\mathbf{0})=0$.
Let $z$ be a  random vector with expectation $E[z]=\mean{z}$ and $\widetilde{z}$ be an estimate of  $\mean{z}$ with associated error covariance  $ \oSzz$. The pair $(\widetilde{z},\Szz)$  is  \textit{$f$-consistent} if  $E[\widetilde{z}] = \mean{z}$ and
\begin{equation}
f(\Szz) \geq  f(\oSzz) 
\end{equation}
\end{definition}

 \medskip
 \remark{Observe that consistency implies $f$-consistency. However, the converse in not necessarily true. }

 \medskip
 \problemstm[Trace-consistent fusion]{
Given two consistent estimates $(\widetilde{x},\Sxx)$, $(\widetilde{y},\Syy)$ of $\mean{z}$, where  $\Sxx,\Syy$ are known upper bounds on the true error variances.  
The problem at hand consists of fusing the two consistent estimates $(\widetilde{x},\Sxx)$, $(\widetilde{y},\Syy)$ in a single trace-consistent estimate   $(\widetilde{z},\Szz)$, where $\widetilde{z}$ is a linear combination of $x$ and $y$ and  
\begin{equation}
\trace(\Szz) \geq   \trace(\oSzz) 
\end{equation}
}
Next, we introduce a game-theoretic formulation for the problem of trace-consistent fusion. Relaxing the consistency constraint to the trace-consistency constraint enables us to estimate the weighting matrices $W_x,W_y$ according to some optimality criterion, which is none other than the minimax of the trace of the covariance matrix.

\remark{No assumptions on the distribution of the estimates $\tilde{x}$ and $\tilde{y}$ have been made so far.}

\section{ROBUST FUSION}

\label{sec:robustfusion}

The goal of this section is the derivation of our minimax approach.
First, we need some basic notions from game theory.
A zero-sum, two-player game on $\reals^{m} \times \reals^{n}$ is defined by a pay-off function $\mapdef{f}{\reals^m \times \reals^n}{\reals}$.
 Intuitively, the first player makes a move $u \in \reals^m$ then, the second player makes a move $v \in \reals^m$ and receives payment from the first player equal to $f(u,v)$.
 The goal of the first player is to minimize its payment and the goal of the second player is to maximize the received payment.
 The game is \notion{convex-concave} if the pay-off function $f(u,v)$ is convex in $u$ for fixed $v$ and concave in $v$ for fixed $u$.
 For a review minimax and convex-concave games in the context of convex optimization, we refer the reader to \cite{ghosh2003minimax}.

Let $z$ be a  random vector with expectation $E[z]=\mean{z}$.
Assume we have two estimates $(\widetilde{x},\Sxx)$, $(\widetilde{y},\Syy)$ of $\mean{z}$ where  $\Sxx,\Syy$ are approximations to the true error covariances $\oSxx$, $\oSyy$.
Based on the discussion of Section \ref{sec:problemfrom}, the fused estimate is of the form
\begin{equation}
\widetilde{z} = (I-K) \widetilde{x} + K \widetilde{y}
\end{equation}
and the  associated error covariance $\oSzz \doteq \covs{\widetilde{z}-\mean{z}}$ is given by
\begin{equation}
\oSzz = 
\begin{bmatrix}
I-K & K
\end{bmatrix}
\begin{bmatrix}
 \oSxx &  \oSxy \\
  \oSxy^T & \oSyy
\end{bmatrix}
\begin{bmatrix}
I-K^{ T}  \\ K^{T}
\end{bmatrix}
\end{equation}
However, $\oSxx$, $\oSyy$ are not known. Therefore, we define
\begin{equation}
\Szz \doteq 
\begin{bmatrix}
I-K & K
\end{bmatrix}
\begin{bmatrix}
 \Sxx &  \Sxy \\
  \Sxy^T & \Syy
\end{bmatrix}
\begin{bmatrix}
I-K^{ T}  \\ K^{T}
\end{bmatrix}
\end{equation}
where we have the following Linear Matrix Inequality (LMI) constraint on  $\Sxy$
\begin{equation}
\label{eq:validcor}
\begin{bmatrix}
 \Sxx &  \Sxy \\
  \Sxy^T & \Syy
\end{bmatrix}
\succeq 0
\end{equation}

\remark{It can be  seen that $\trace(\Szz)$ is convex in $K$ for a fixed $\Sxy$ satisfying \eqref{eq:validcor}.
Therefore, the supremum  of $\trace(\Szz)$ over all $\Sxy$ satisfying \eqref{eq:validcor} is a convex function of $K$.
Moreover, for a fixed $K$, $\trace(\Szz)$ is linear, and thus concave as well, in $\Sxy$ with a convex domain defined by \eqref{eq:validcor}.
It follows that  $\trace(\Szz)$ is a convex-concave function in $(K,\Sxy)$.
}

As anticipated, we formulate the problem of finding the weighting matrix $K$ as a zero-sum, two-player convex-concave game: the first player chooses $K$ to minimize $\trace(\Szz)$ whereas the second player chooses $\Sxy$ to maximize $\trace(\Szz)$. More specifically,
let $(K^\star,\Sxy^\star)$ be the solution to the following minimax optimization problem
\begin{equation}
\label{eq:fusion_convex}
\begin{aligned}
& \underset{K}{\text{minimize}} \ \ 
&  & \sup_{\Sigma_\xy } \ \  \trace(\Szz)\\
& \text{subject to}
& &  \begin{bmatrix}
 \Sigma_\xx &  \Sigma_\xy \\
  \Sigma_\xy^T & \Sigma_\yy
\end{bmatrix}
\succeq 0
\end{aligned}
\end{equation}
Then, the fused estimated and the associated error covariance are given by
\begin{equation}
\label{eq:updatesimple}
\begin{split}
\widetilde{z} &= (I-K^\star) \widetilde{x} + K^\star \widetilde{y} \\
\Szz^\star &= 
\begin{bmatrix}
I-K^\star  & K^\star
\end{bmatrix}
\begin{bmatrix}
 \Sigma_\xx &  \Sigma_\xy^\star \\
  \Sigma_\xy^{\star T} & \Sigma_\yy
\end{bmatrix}
\begin{bmatrix}
I-K^{\star T}  \\ K^{\star T}
\end{bmatrix}
\end{split}
\end{equation}

Naturally, we have the following lemma.
\begin{lemma}
\label{lem:simplefusion}
If $(\widetilde{x},\Sxx)$ and $(\widetilde{y},\Syy)$ are consistent, then the pair $(\widetilde{z},\Szz^\star)$ given by \eqref{eq:updatesimple} is trace-consistent.
\end{lemma}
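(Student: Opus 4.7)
The proof has two obligations coming from the definition of trace-consistency: first, that $\widetilde{z}$ is unbiased for $\mean{z}$, and second, that $\trace(\Szz^\star) \geq \trace(\oSzz)$. The unbiasedness is essentially automatic: since $(\widetilde{x},\Sxx)$ and $(\widetilde{y},\Syy)$ are consistent we have $E[\widetilde{x}] = E[\widetilde{y}] = \mean{z}$, and because $(I - K^\star) + K^\star = I$, linearity of expectation yields $E[\widetilde{z}] = (I - K^\star)\mean{z} + K^\star \mean{z} = \mean{z}$.

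The substantive content is the trace inequality. My plan is to exhibit a single feasible cross-covariance $\Sxy$ for the inner supremum in \eqref{eq:fusion_convex} whose associated payoff already dominates $\trace(\oSzz)$. The natural candidate is $\Sxy = \oSxy$, the true (unknown) cross-covariance. I would first verify feasibility, i.e.
\begin{equation*}
M \;\doteq\; \begin{bmatrix} \Sxx & \oSxy \\ \oSxy^T & \Syy \end{bmatrix} \;\succeq\; 0,
\end{equation*}
by writing $M$ as the sum of the true joint covariance $\bar M = \begin{bmatrix} \oSxx & \oSxy \\ \oSxy^T & \oSyy \end{bmatrix}$ (which is PSD because it is a genuine covariance matrix) and the block-diagonal matrix $\diag(\Sxx - \oSxx,\, \Syy - \oSyy)$, which is PSD by the consistency hypotheses $\Sxx \succeq \oSxx$ and $\Syy \succeq \oSyy$.

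Once feasibility of $\Sxy = \oSxy$ is established, the definition of $(K^\star, \Sxy^\star)$ as a saddle point of the convex–concave game gives
\begin{equation*}
\trace(\Szz^\star) \;\geq\; \trace\!\left(\begin{bmatrix} I - K^\star & K^\star \end{bmatrix} M \begin{bmatrix} I - K^{\star T} \\ K^{\star T} \end{bmatrix}\right).
\end{equation*}
Subtracting $\trace(\oSzz)$ from the right-hand side and using $M - \bar M = \diag(\Sxx - \oSxx,\, \Syy - \oSyy)$, the residual equals
\begin{equation*}
\trace\!\bigl((I - K^\star)(\Sxx - \oSxx)(I - K^\star)^T\bigr) + \trace\!\bigl(K^\star(\Syy - \oSyy)K^{\star T}\bigr),
\end{equation*}
which is nonnegative because both summands are traces of PSD matrices. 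Chaining the inequalities yields $\trace(\Szz^\star) \geq \trace(\oSzz)$.

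The only place where care is needed is the feasibility step: one must invoke that $\bar M$ is a true covariance matrix (not merely that its diagonal blocks are bounded) in order to justify $M \succeq 0$. Everything else is bookkeeping with block matrices and monotonicity of the trace under PSD ordering. In particular, note that we never need to compute $K^\star$ or $\Sxy^\star$ explicitly; the argument treats the saddle point only through the variational inequality satisfied by the supremum, which is why the same proof structure should transfer verbatim to the linear measurement extension in Section~\ref{sec:rkf}.
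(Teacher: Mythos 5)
Your proof is correct and follows essentially the same route as the paper's: both arguments rest on the feasibility of $\oSxy$ for the constraint \eqref{eq:validcor}, the optimality of $\Sxy^\star$ in the inner supremum at $K^\star$, and the positive semidefiniteness of the diagonal differences $\Sxx-\oSxx$ and $\Syy-\oSyy$. You merely reorder the steps (invoking optimality before decomposing the difference rather than after) and, usefully, spell out the feasibility verification that the paper dismisses as ``straightforward.''
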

A proof of lemma \ref{lem:simplefusion} is presented \inAppendix \ref{app:proofsimplefusion}.

The problem of numerically solving problem \eqref{eq:fusion_convex} is the topic of subsequent sections.
The case under consideration in this section can be viewed as a special case of the next section.

\section{ROBUST LINEAR UPDATE}

\label{sec:rkf}

In this section, we explore a more general setting. 
We assume we have two random vectors $x$, $y$ with expectations $E[x]=\mean{x}$ and $E[y]= \mean{y}$.
We have some estimates $\widetilde{x}$ and  $\widetilde{y}$ of $\mean{x}$ and $\mean{y}$ respectively with associated error covariances $\oSxx$ and $\oSyy$.
As before, we assume that the true error covariances are only approximately known. Let $\Sxx$ and $\Syy$ denote these approximate values.
We assume we have a linear measurement model of the form
\begin{equation}
z = C \mean{x} + D \mean{y} + \eta 
\end{equation}
where $\eta$ is a zero-mean noise process with covariance $\Sigma_\eta$.
We assume that the measurement noise process $\eta$ is independent to the estimates $\widetilde{x}$ and  $\widetilde{y}$.
As in the classic Kalman filter derivation, we are looking for  an update step of the form
\begin{equation}
\widetilde{x}^+ =  \widetilde{x} + K ( z-\widetilde{z})  
\end{equation}
where $\widetilde{z} \doteq C \widetilde{x} + D \widetilde{y}$.  The error of the update is given by
\begin{equation}
\widetilde{x}^+ -\mean{x}
       =  (I-KC)(\widetilde{x}-\mean{x}) -  K D ( \widetilde{y}-\mean{y})   +K \eta  
\end{equation}
and the associated error covariance  is defined as $\oSxx^+ \doteq \covs{\widetilde{x}^+ - \mean{x}}$ and is given by
 \begin{equation}
 \begin{split}
  \oSxx^+ & =
  \begin{bmatrix}
I-KC & -KD
\end{bmatrix}
\begin{bmatrix}
 \oSxx &  \oSxy \\
  \oSxy^T & \oSyy
\end{bmatrix}
\begin{bmatrix}
I-C^TK^{ T}  \\ -D^TK^{T}
\end{bmatrix}
\\
& \ + K \Sigma_\eta K^T
\end{split}
 \end{equation}
However, the true error covariances $ \oSxx$ and $ \oSyy$ are not known. Therefore, we define
 \begin{equation}
 \begin{split}
  \Sxx^+ & \doteq
  \begin{bmatrix}
I-KC & -KD
\end{bmatrix}
\begin{bmatrix}
 \Sxx &  \Sxy \\
  \Sxy^T & \Syy
\end{bmatrix}
\begin{bmatrix}
I-C^TK^{ T}  \\ -D^TK^{T}
\end{bmatrix}
\\
& \ + K \Sigma_\eta K^T
 \end{split}
 \end{equation}
where $\Sxy$ should satisfy \eqref{eq:validcor} in order to be a valid cross-correlation. 
To alleviate notation, let $X=K^T$ and define 
\begin{equation}
\label{eq:fdef}
 f(X,\Sxy) \doteq \trace (\Sigma_\xx^+)
\end{equation}
By rewriting \eqref{eq:validcor} using Schur complement, the minimax formulation is written as follows
 \begin{equation}
 \label{eq:problemgeneral}
\begin{aligned}
& \underset{X}{\text{minimize}} \ \ \sup_Q
& &  f(X,Q) \\
& \text{subject to}
& &     \Syy^{-1/2} Q^T  \Sxx^{-1} Q\Syy^{-1/2} - I\preceq 0
\end{aligned}
\end{equation}

Let $(X^\star, Q^\star)$ be the optimal solution of problem \eqref{eq:problemgeneral} and let  $(K^\star,\Sxy^\star) = (X^{\star T}, Q^\star)$. Then,
\begin{equation}
 \label{eq:fusiongeneral}
\begin{split}
 \widetilde{x}^+ &= (I-K^\star C) \widetilde{x} - K^\star D \widetilde{y} \\
  \Sxx^{+\star} &=
  \begin{bmatrix}
I-K^\star C & -K^\star  D
\end{bmatrix}
\begin{bmatrix}
 \Sxx &  \Sxy^\star \\
  \Sxy^{\star T} & \Syy
\end{bmatrix}
\begin{bmatrix}
I-C^TK^{\star T}  \\ -D^TK^{\star T}
\end{bmatrix}
\\
& \ + K \Sigma_\eta K^T
\end{split}
 \end{equation}

Naturally, we have the following lemma.
\begin{lemma}
\label{lem:rffusion}
If $(\widetilde{x},\Sxx)$ and $(\widetilde{y},\Syy)$ are consistent, then the pair $(\widetilde{x}^+,\Sxx^{+ \star})$ given by \eqref{eq:fusiongeneral} is trace-consistent
\end{lemma}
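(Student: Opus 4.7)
The goal is to verify that $(\widetilde{x}^+, \Sxx^{+\star})$ is trace-consistent, which requires (i) $E[\widetilde{x}^+] = \mean{x}$ and (ii) $\trace(\Sxx^{+\star}) \geq \trace(\oSxx^+)$. For (i) I would just take expectations in $\widetilde{x}^+ = \widetilde{x} + K^\star(z - C\widetilde{x} - D\widetilde{y})$ and use the consistency hypothesis $E[\widetilde{x}]=\mean{x}$, $E[\widetilde{y}]=\mean{y}$, together with $E[\eta]=0$ so that $E[z] = C\mean{x} + D\mean{y}$; the correction term vanishes in expectation.

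The real content is (ii). My plan is to sandwich $\trace(\oSxx^+)$ between $\trace(\Sxx^{+\star})$ via two inequalities, using an auxiliary ``true-covariance'' feasible point for the inner maximization. First I would argue that the true cross-covariance $\oSxy$ is feasible for the constraint \eqref{eq:validcor} when the diagonal blocks are replaced by the nominal $\Sxx, \Syy$. This follows from writing
\begin{equation*}
\begin{bmatrix} \Sxx & \oSxy \\ \oSxy^T & \Syy \end{bmatrix}
= \begin{bmatrix} \oSxx & \oSxy \\ \oSxy^T & \oSyy \end{bmatrix}
+ \begin{bmatrix} \Sxx-\oSxx & 0 \\ 0 & \Syy-\oSyy \end{bmatrix},
\end{equation*}
which is a sum of two PSD matrices: the first since $\oSxy$ is a genuine cross-covariance block, the second by consistency of the inputs.

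Next, since $(K^\star,\Sxy^\star)$ solves the minimax \eqref{eq:problemgeneral} and $\Sxy^\star$ in particular achieves the inner supremum at $K^\star$, feasibility of $\oSxy$ yields $\trace(\Sxx^{+\star}) \geq \trace\bigl(\Sxx^+(K^\star,\oSxy)\bigr)$, where $\Sxx^+(K,\Sxy)$ denotes the nominal expression. Then I would compare $\Sxx^+(K^\star,\oSxy)$ with the true $\oSxx^+$ (both using $\oSxy$): their difference is
\begin{equation*}
\begin{bmatrix} I-K^\star C & -K^\star D \end{bmatrix}
\begin{bmatrix} \Sxx-\oSxx & 0 \\ 0 & \Syy-\oSyy \end{bmatrix}
\begin{bmatrix} I-C^TK^{\star T} \\ -D^TK^{\star T} \end{bmatrix},
\end{equation*}
which is of the form $A P A^T$ with $P \succeq 0$, hence PSD. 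Taking traces gives $\trace\bigl(\Sxx^+(K^\star,\oSxy)\bigr) \geq \trace(\oSxx^+)$, and concatenating the two inequalities finishes the argument. The noise term $K\Sigma_\eta K^T$ appears identically on both sides and cancels.

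The only subtle step is the first one, namely recognizing that consistency of the inputs \emph{and} PSD-ness of the true joint covariance are exactly what is needed to certify $\oSxy$ as a feasible inner variable; everything else is a straightforward PSD/congruence manipulation. The proof for Lemma \ref{lem:simplefusion} is the obvious specialization obtained by setting $C=I$, $D=I$, $\Sigma_\eta=0$ and replacing $K$ with its role in \eqref{eq:updatesimple}.
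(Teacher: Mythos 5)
Your proof is correct and follows essentially the same route as the paper's (which proves Lemma \ref{lem:simplefusion} and declares this case analogous): your two-step sandwich through the intermediate point $\Sxx^+(K^\star,\oSxy)$ is just a repackaging of the paper's direct decomposition of the difference into a PSD diagonal-block part plus a cross term whose trace is nonnegative by optimality of $\Sxy^\star$. The only addition is that you spell out the feasibility of $\oSxy$ via the sum-of-PSD-matrices identity, which the paper dismisses as ``straightforward.''
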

The proof of lemma \ref{lem:rffusion} is exactly analogous to the proof of lemma \ref{lem:simplefusion} presented \inAppendix \ref{app:proofsimplefusion}.

\remark{When $C=I$, $D=-I$, $\Sigma_\eta =0$, we recover the case of the simple fusion of two random vectors.}

\section{INTERIOR POINT METHODS FOR CONVEX-CONCAVE GAMES}
 
 \label{sec:numerical_opt}
 
In this section, we describe the numerical method we use to solve Problem \eqref{eq:problemgeneral}. First, we will look at the simpler case of an unconstrained  convex-concave game with pay-off function $f(u,v)$. A point $(u^\star,v^\star)$ is a \notion{saddle} point for an unconstrained convex-concave game with pay-off function $f(u,v)$ if
 \begin{equation}
  f(u^\star,v) \leq f(u^\star,v^\star) \leq f(u,v^\star)
 \end{equation}
and the optimality conditions for differentiable convex-concave pay-off function are
 \begin{equation}
  \nabla_u f(u^\star,v^\star) =0 , \qquad   \nabla_v f(u^\star,v^\star) =0
 \end{equation}
We   use the infeasible start Newton method  \cite{boyd2004convex}, outlined in Algorithm 1, to find the optimal solution of the unconstrained problem:
\begin{equation}
\underset{u}{\text{minimize}} \ \underset{u}{\text{maximize}} \ f(u,v)
\end{equation}
Intuitively, at each step the directions $\Delta u_{nt},\Delta v_{nt}$ are  the solutions of the first order approximation
\begin{equation}
0 = r(u+\Delta u_{nt},v+\Delta v_{nt}) \approx r(u,v)  +  D r(u,v)[ \Delta u_{nt} ,\Delta v_{nt}]
\end{equation}
where $r(u,v) =  [ \nabla_u f(u,v)^T,\nabla_v f(u,v)^T]^T $. Then, a backtracking line search is performed on the norm of the residual  along the previously computed directions.

 \begin{algorithm}
 \label{alg:infeasiblestart}
\caption{Infeasible start Newton method.}
\label{alg:DOI}
\begin{algorithmic}[1]
\Statex \hspace{-\algorithmicindent} \textbf{given:} starting points $u,v \in \mathbf{dom} f$,
\Statex \hspace{\algorithmicindent} tolerance $\epsilon >0$, $\alpha \in (0,1/2)$, $\beta \in (0,1)$.
\Statex \hspace{-\algorithmicindent} \textbf{Repeat}
\Statex 1. $r(u,v) =  [ \nabla_u f(u,v)^T,\nabla_v f(u,v)^T]^T $ 
\Statex 2. Compute Newton steps by solving 
\[ D r(u,v)[ \Delta u_{nt} ,\Delta v_{nt}]= - r(u,v)\]
\Statex  3. Backtracking line search on $\norm{r}_2$. 
\Statex \hspace{\algorithmicindent} $t=1$.
\Statex \hspace{\algorithmicindent} $u_t=  u+t\Delta u_{nt}$, $v_t=  v+t\Delta v_{nt}$.
\Statex \hspace{\algorithmicindent} \textbf{While} {$\norm{r(u_t,v_t)}_2 > (1-\alpha t) \norm{r(u,v)}_2$}
\Statex \hspace{\algorithmicindent} \hspace{\algorithmicindent} $t= \beta t$.
\Statex \hspace{\algorithmicindent} \hspace{\algorithmicindent} $u_t=  u+t\Delta u_{nt}$, $v_t=  v+t\Delta v_{nt}$.
\Statex \hspace{\algorithmicindent} \textbf{EndWhile} 
\Statex 4. Update: $u=u+t\Delta u_{nt}$, $v=v+t\Delta v_{nt}$.
\Statex \hspace{-\algorithmicindent} \textbf{until} {$\norm{r(u,v)}_2 \leq \epsilon$}
\end{algorithmic}
\end{algorithm}

However, the problem at hand is slightly more complicated since it involves a linear matrix inequality.  Therefore, we use  the barrier method \cite{boyd2004convex}. Intuitively, a sequence of unconstrained  minimization  problems is solved, using the last point iteration is the starting point for the next iteration. Define for $t>0$, the cost function $f_t(X,Q)$ by 
 \begin{equation}
  f_t(X,Q) = t f(X,Q) + \log \det ( - f_1(Q))
 \end{equation}
 where $f(X,Q)$ as defined in \eqref{eq:fdef} and
 \begin{equation}
  f_1(Q) =  \Syy^{-1/2} Q^T  \Sxx^{-1} Q\Syy^{-1/2} - I
 \end{equation}
 Intuitively, $\frac{1}{t}f_t$ approaches $f$ as $t \rightarrow \infty$.
Note that $f_t(X,Q)$ is still convex-concave for $t >0$.  The optimality conditions for a fixed $t>0$ are given by
\begin{equation}
 \nabla_X f_t(X\opt ,Q\opt) = 0, \quad  \nabla_Q f_t(X\opt ,Q\opt) = 0 
\end{equation}
where explicit expressions for $\nabla_X f_t$ and $\nabla_Q f_t$ are presented \inAppendix \ref{app:newton} along with the linear equations  for computing $\Delta X_{nt},\Delta Q_{nt}$ .

Finally, the structure of the problem allows us to easily identify a strictly feasible initial point $(X_0,Q_0)$ where  $Q_0 = 0$ and $X_0$ is given by
 \begin{equation}
  (C \Sxx C^T + D\Syy D^T + \Sigma_\eta) X_0  = C \Sxx
 \end{equation}

 For details on the convergence of the infeasible start Newton method and the barrier method for convex-concave games, we refer the reader to \cite{ghosh2003minimax,boyd2004convex}.
 
  \remark{Notation: $Df(x)[h]$ denotes the \notion{(Fr\'echet) derivative} or \notion{differential} of $f$ at $x$ along $h$. Similarly, $Df(x,y)[h_x,h_y]$ denotes the \notion{differential} of $f$ at $(x,y)$ along $(h_x,h_y)$.}

\section{NUMERICAL EXAMPLES}

\label{sec:numericexmaples}

In this section, we present two numerical examples which shed light on the differences between the Covariance Intersection (CI) and the proposed Robust Fusion (RF) approaches.
First, consider the example of fusing two random variables with means $\widetilde{x} = \widetilde{y} = \begin{bmatrix} 0 & 0 \end{bmatrix}^T$ 
and covariances
\begin{equation}
\Sxx = \begin{bmatrix}
5 & 0 \\ 0 & 5 
\end{bmatrix}
, \
\Syy = 
\begin{bmatrix}
3 & 0 \\ 0 & 7 
\end{bmatrix}
\end{equation}
Let $(\widetilde{z}_{CI},\Sigma_{CI})$ and $(\widetilde{z}_{RF},\Sigma_{RF})$ be the fused estimates and the corresponding error covariances obtained from Covariance Intersection and Robust Fusion. We have that  $\widetilde{z}_{CI} = \widetilde{z}_{RF} = [0 \ 0]^T$ and
\begin{equation}
\Sigma_{CI} =  \begin{bmatrix}
3.79 & 0 \\ 0 & 5.79 
\end{bmatrix},
\
\Sigma_{RF} =  \begin{bmatrix}
3  & 0 \\ 0 & 5 
\end{bmatrix},
\end{equation}


\begin{figure}[ht!]
\label{fig:MLE_CI_RF}
\begin{center}
\includegraphics[trim={5.8cm 17.4cm 4cm 4.3cm},clip,scale=0.85]{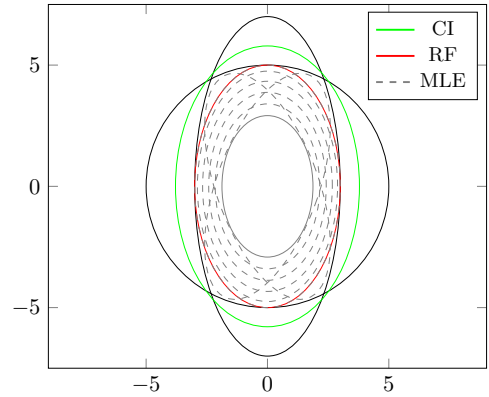}

\caption{ Confidence ellipses: given a covariance matrix $\Sigma$ we draw the set $\{ x : x^T \Sigma^{-1}x =1\}$.
Initial confidence ellipse (black), Maximum likelihood Estimate (MLE) confidence ellipse (gray dashed) for various values of correlation, CI confidence ellipse (green) and RF confidence ellipse (red). 
The confidence ellipses  obtained from MLE lie in the intersection of the two ellipsoids $\{ x: x^T\Sxx^{-1}x \leq 1\}$ and $\{ x: x^T\Syy^{-1}x \leq 1\}$. 
Since the proposed approach is equivalent to MLE for some worst-case correlation, the RF confidence ellipse lies in the intersection of the two ellipsoids as well.
When correlation increases, the trace of the covariance of  MLE approaches the trace of $\Sigma_{RF}$.
CI is not maximum likelihood for any value of correlation but produces a guaranteed upper bound on the true error covariance.
}
\end{center}
\end{figure}

In the second example, we consider the case of partial measurements. More specifically, using notation of Section \ref{sec:rkf},  let 
\begin{equation}
\widetilde{x} = 
\begin{bmatrix}
0 \\ 0 
\end{bmatrix}
, \ 
\Sxx = \begin{bmatrix}
5 & 0 \\ 0 & 5 
\end{bmatrix}
\end{equation}
and $C = \begin{bmatrix}
1 & 0
\end{bmatrix}$, $z= \widetilde{z} = 0$, $ \Syy = 1$, $D=1$ and  $\Sigma_\eta = 0$.
Both Covariance Intersection and Robust Fusion yield $\widetilde{z}^+ = 0$ but
\begin{equation}
\Sigma_{CI} =  \begin{bmatrix}
3 & 0 \\ 0 & 6 
\end{bmatrix},
\
\Sigma_{RF} =  \begin{bmatrix}
1  & 0 \\ 0 & 5 
\end{bmatrix},
\end{equation}
Observe that despite we have a measurement  of only the first coordinate, the error variance of the second coordinate increased!
The reason for this phenomenon is that the CI updates the current estimate and the associated error covariance along a predefined direction only. 
Although $\trace(\Sigma_{CI}) < \trace(\Sxx)$, the bound on the true error covariance estimated by Covariance Intersection is very conservative.

\begin{figure}[ht!]
\label{fig:RF_CI_partial}
\begin{center}
\includegraphics[trim={5.8cm 17.4cm 4cm 4.3cm},clip,scale=0.85]{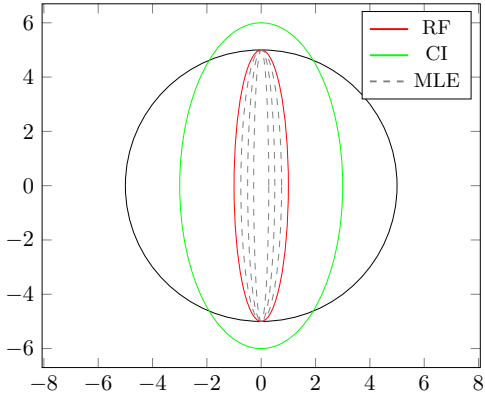}
\caption{Illustration of the second numerical example. 
Initial confidence ellipse (black), Maximum likelihood Estimate (MLE) confidence ellipse (gray dashed) for various values of correlation, CI confidence ellipse (green) and RF confidence ellipse (red). 
}
\end{center}
\end{figure}

\section{SIMULATIONS}

\label{sec:simulations}

Finally, we consider an application in distributed state estimation using relative position measurements. 
We experiment with a group of $n=4$ agents on the plane with a communication network topology as depicted in Fig. 3.
If there is an edge from  $i$ to $j$, then agent $i$ transmits its current state estimate and the corresponding error covariance estimate to agent $j$ which upon receipt, takes a measurement of the relative position and updates its own state estimate and associated error covariance estimate.
All agents have identical dynamics described by
\begin{equation}
 \begin{bmatrix}
  x_i(t+1) \\ v_i(t+1)
 \end{bmatrix}
=
\begin{bmatrix}
 I & I \\ 0 & I
\end{bmatrix}
 \begin{bmatrix}
  x_i(t) \\ v_i(t)
 \end{bmatrix}
 +
 \begin{bmatrix}
 0  \\  I
\end{bmatrix}
w_i(t)
\end{equation}
where $x_i(t) \in \reals^2$ and $v_i(t) \in \reals^2$ denote respectively the position and velocity of agent $i$ at time instance $t$ and
$w_i(t) \sim \normal (0,Q_i(t))$ is the noise process. If $\mathbf{x}_i(t) \doteq \begin{bmatrix} x_i(t)^T & v_i(t)^T \end{bmatrix}^T$, then let $A,B$ such that
\begin{equation}
 \mathbf{x}_i(t+1) = A \mathbf{x}_i(t) + B w_i(t)
\end{equation}

Only agent $1$ is equipped with  global position system (GPS), that is we have a measurement of the form
\begin{equation}
 y_1(t) = x_1(t) + \eta_1 (t)
\end{equation}
where $\eta_1(t) \sim \normal (0,R_1)$. Agent 1 performs a standard Kalman Filter update step after a GPS measurement.
For each edge $(i,j)$ we have a pairwise measurement of the form
\begin{equation}
 y_{ij}(t) = x_j(t)-x_i(t) + \eta_{ij} (t)
\end{equation}
where $\eta_{ij}(t) \sim \normal (0,R_{ij})$. Updates of the state estimates can be performed by either ignoring cross-correlations (Naive Fusion) or by one of Covariance Intersection or the proposed Robust Fusion.

Each agent maintains only its one state and communicates its to each neighbors at each time instance. 
The individual prediction step is the same as the Kalman Filter (KF) prediction step, that is
\begin{align}
  \widetilde{\mathbf{x}}_i(t+1|t) &= A \widetilde{\mathbf{x}}_i(t|t)  \\
  \Sigma_i(t+1|t) &=  A \Sigma_i(t+1|t)  A^T + B Q_i(t) B^T
\end{align}
where $\widetilde{\mathbf{x}}_i(t+1|t)$ denotes the estimate of agent $i$ for its state at time $t+1$ having received measurements up to time $t$ and $ \Sigma_i(t+1|t)$ is the associated error covariance.

We evaluate four estimator, three decentralized and one centralized: Naive Fusion (NF) which ignores correlations,  Robust Fusion (RF), Covariance Intersection (CI) and Centralized Kalman Filter (CKF). The Centralized Kalman Filter (CKF) is simply a standard Kalman Filter containing all agent states. It serves as a measure of how close the decentralized estimators are to the optimal centralized estimator.
Results can be seen in Figure 4 and Table I. We used the following values for the noise parameters:   $Q_i = 10^{-6} I_2$ for all agents, $R_1 =  I_2$ and $R_{ij} = 10^{-2} I_2$ for all pairwise measurements. 
The Robust Fusion  based estimator significantly outperforms the Covariance Intersection  based estimator which produces particularly noisy velocity estimates.

\begin{figure}
\label{fig:network_topology}
\begin{center}
\begin{tikzpicture}

\tikzset{vertex/.style = {shape=circle,draw,minimum size=1.5em}}
\tikzset{edge/.style = {->,> = latex'}}
\node[vertex] (a) at  (3,0) {$1$};
\node[vertex] (b) at  (3,3) {$2$};
\node[vertex] (c) at  (0,3) {$3$};
\node[vertex] (d) at  (0,0) {$4$};
\draw[edge] [bend right=15]  (a) to (b);
\draw[edge] (b) to (c);
\draw[edge] (b) to (d);
\draw[edge]  (c) to (d);
\draw[edge] [bend right=15] (d) to (a);
\draw[edge]   (c) to (a);
\draw[edge]  [bend right=20]  (b) to (a);
\draw[edge] [bend right=20] (a) to (d);
\end{tikzpicture}
\caption{Network topology.}
\end{center}
\end{figure}
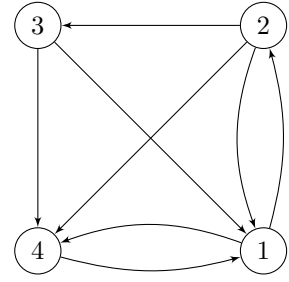

\def \figscale{0.44}
\def \labelx {\Large $t$}
\def \labely {\Large  Position error}
\def \labelyv {\Large Velocity error}
\begin{figure*}[ht!]
\label{fig:synthetic}
\begin{center}
\begin{tabular}{ccccc}

\rotatebox{90}{\hspace{1.0cm} \vspace{12cm}  Agent 1}
&
\scalebox{\figscale}{
\begin{tikzpicture}
	\begin{axis}[
		xlabel={\labelx},
		ylabel={\labely},
		ymax = 7,
		xmax = 300,
		xmin =1,
		ymin=0
	]
	\addplot+[red,solid,mark=none] table[x=f,y=prf1] {simulationPos.txt}; \addlegendentry{\small RF}
	\addplot+[black,solid,mark=none] table[x=f,y=pnf1] {simulationPos.txt}; \addlegendentry{\small NF}
	\end{axis}
\end{tikzpicture}
}
&
\scalebox{\figscale}{
\begin{tikzpicture}
	\begin{axis}[
		xlabel={\labelx},
		ylabel={\labely},
		ymax = 2,
		xmax = 300,
		xmin =1,
		ymin=0
	]
	\addplot+[red,solid,mark=none] table[x=f,y=prf1] {simulationPos.txt}; \addlegendentry{\small RF}
	\addplot+[blue,solid,mark=none] table[x=f,y=pcf1] {simulationPos.txt}; \addlegendentry{\small CKF}
	\end{axis}
\end{tikzpicture}
}
&
\scalebox{\figscale}{
\begin{tikzpicture}
	\begin{axis}[
		xlabel={\labelx},
		ylabel={\labely},
		ymax = 2,
		xmax = 300,
		xmin =1,
		ymin=0
	]
	\addplot+[red,solid,mark=none] table[x=f,y=prf1] {simulationPos.txt}; \addlegendentry{\small RF}
	\addplot+[green,solid,mark=none] table[x=f,y=pci1] {simulationPos.txt}; \addlegendentry{\small CI}
	\end{axis}
\end{tikzpicture}
}
&
\scalebox{\figscale}{
\begin{tikzpicture}
	\begin{axis}[
		xlabel={\labelx},
		ylabel={\labelyv},
		ymax = 0.3,
		xmax = 300,
		xmin =1,
		ymin=0
	]
	\addplot+[red,solid,mark=none] table[x=f,y=vrf1] {simulationVel.txt}; \addlegendentry{\small RF}
	\addplot+[green,solid,mark=none] table[x=f,y=vci1] {simulationVel.txt}; \addlegendentry{\small CI}
	\addplot+[blue,solid,mark=none] table[x=f,y=vcf1] {simulationVel.txt}; \addlegendentry{\small CKF}
	\end{axis}
\end{tikzpicture}
}
\\
\rotatebox{90}{\hspace{1.0cm} Agent 2}
&
\scalebox{\figscale}{
\begin{tikzpicture}
	\begin{axis}[
		xlabel={\labelx},
		ylabel={\labely},
		ymax = 7,
		xmax = 300,
		xmin =1,
		ymin=0
	]
	\addplot+[red,solid,mark=none] table[x=f,y=prf2] {simulationPos.txt}; \addlegendentry{\small RF}
	\addplot+[black,solid,mark=none] table[x=f,y=pnf2] {simulationPos.txt}; \addlegendentry{\small NF}
	\end{axis}
\end{tikzpicture}
}
&
\scalebox{\figscale}{
\begin{tikzpicture}
	\begin{axis}[
		xlabel={\labelx},
		ylabel={\labely},
		ymax = 2,
		xmax = 300,
		xmin =1,
		ymin=0
	]
	\addplot+[red,solid,mark=none] table[x=f,y=prf2] {simulationPos.txt}; \addlegendentry{\small RF}
	\addplot+[blue,solid,mark=none] table[x=f,y=pcf2] {simulationPos.txt}; \addlegendentry{\small CKF}
	\end{axis}
\end{tikzpicture}
}
&
\scalebox{\figscale}{
\begin{tikzpicture}
	\begin{axis}[
		xlabel={\labelx},
		ylabel={\labely},
		ymax = 2,
		xmax = 300,
		xmin =1,
		ymin=0
	]
	\addplot+[red,solid,mark=none] table[x=f,y=prf2] {simulationPos.txt}; \addlegendentry{\small RF}
	\addplot+[green,solid,mark=none] table[x=f,y=pci2] {simulationPos.txt}; \addlegendentry{\small CI}
	\end{axis}
\end{tikzpicture}
}
&
\scalebox{\figscale}{
\begin{tikzpicture}
	\begin{axis}[
		xlabel={\labelx},
		ylabel={\labelyv},
		ymax = 0.3,
		xmax = 300,
		xmin =1,
		ymin=0
	]
	\addplot+[red,solid,mark=none] table[x=f,y=vrf2] {simulationVel.txt}; \addlegendentry{\small RF}
	\addplot+[green,solid,mark=none] table[x=f,y=vci2] {simulationVel.txt}; \addlegendentry{\small CI}
	\addplot+[blue,solid,mark=none] table[x=f,y=vcf2] {simulationVel.txt}; \addlegendentry{\small CKF}
	\end{axis}
\end{tikzpicture}
}
\\
\rotatebox{90}{\hspace{1.0cm} Agent 3}
&
\scalebox{\figscale}{
\begin{tikzpicture}
	\begin{axis}[
		xlabel={\labelx},
		ylabel={\labely},
		ymax = 7,
		xmax = 300,
		xmin =1,
		ymin=0
	]
	\addplot+[red,solid,mark=none] table[x=f,y=prf3] {simulationPos.txt}; \addlegendentry{\small RF}
	\addplot+[black,solid,mark=none] table[x=f,y=pnf3] {simulationPos.txt}; \addlegendentry{\small NF}
	\end{axis}
\end{tikzpicture}
}
&
\scalebox{\figscale}{
\begin{tikzpicture}
	\begin{axis}[
		xlabel={\labelx},
		ylabel={\labely},
		ymax = 2,
		xmax = 300,
		xmin =1,
		ymin=0
	]
	\addplot+[red,solid,mark=none] table[x=f,y=prf3] {simulationPos.txt}; \addlegendentry{\small RF}
	\addplot+[blue,solid,mark=none] table[x=f,y=pcf3] {simulationPos.txt}; \addlegendentry{\small CKF}
	\end{axis}
\end{tikzpicture}
}
&
\scalebox{\figscale}{
\begin{tikzpicture}
	\begin{axis}[
		xlabel={\labelx},
		ylabel={\labely},
		ymax = 2,
		xmax = 300,
		xmin =1,
		ymin=0
	]
	\addplot+[red,solid,mark=none] table[x=f,y=prf3] {simulationPos.txt}; \addlegendentry{\small RF}
	\addplot+[green,solid,mark=none] table[x=f,y=pci3] {simulationPos.txt}; \addlegendentry{\small CI}
	\end{axis}
\end{tikzpicture}
}
&
\scalebox{\figscale}{
\begin{tikzpicture}
	\begin{axis}[
		xlabel={\labelx},
		ylabel={\labelyv},
		ymax = 0.3,
		xmax = 300,
		xmin =1,
		ymin=0
	]
	\addplot+[red,solid,mark=none] table[x=f,y=vrf3] {simulationVel.txt}; \addlegendentry{\small RF}
	\addplot+[green,solid,mark=none] table[x=f,y=vci3] {simulationVel.txt}; \addlegendentry{\small CI}
	\addplot+[blue,solid,mark=none] table[x=f,y=vcf3] {simulationVel.txt}; \addlegendentry{\small CKF}
	\end{axis}
\end{tikzpicture}
}
\\
\rotatebox{90}{\hspace{1.0cm} Agent 4}
&
\scalebox{\figscale}{
\begin{tikzpicture}
	\begin{axis}[
		xlabel={\labelx},
		ylabel={\labely},
		ymax = 7,
		xmax = 300,
		xmin =1,
		ymin=0
	]
	\addplot+[red,solid,mark=none] table[x=f,y=prf4] {simulationPos.txt}; \addlegendentry{\small RF}
	\addplot+[black,solid,mark=none] table[x=f,y=pnf4] {simulationPos.txt}; \addlegendentry{\small NF}
	\end{axis}
\end{tikzpicture}
}
&
\scalebox{\figscale}{
\begin{tikzpicture}
	\begin{axis}[
		xlabel={\labelx},
		ylabel={\labely},
		ymax = 2,
		xmax = 300,
		xmin =1,
		ymin=0
	]
	\addplot+[red,solid,mark=none] table[x=f,y=prf4] {simulationPos.txt}; \addlegendentry{\small RF}
	\addplot+[blue,solid,mark=none] table[x=f,y=pcf4] {simulationPos.txt}; \addlegendentry{\small CKF}
	\end{axis}
\end{tikzpicture}
}
&
\scalebox{\figscale}{
\begin{tikzpicture}
	\begin{axis}[
		xlabel={\labelx},
		ylabel={\labely},
		ymax = 2,
		xmax = 300,
		xmin =1,
		ymin=0
	]
	\addplot+[red,solid,mark=none] table[x=f,y=prf4] {simulationPos.txt}; \addlegendentry{\small RF}
	\addplot+[green,solid,mark=none] table[x=f,y=pci4] {simulationPos.txt}; \addlegendentry{\small CI}
	\end{axis}
\end{tikzpicture}
}
&
\scalebox{\figscale}{
\begin{tikzpicture}
	\begin{axis}[
		xlabel={\labelx},
		ylabel={\labelyv},
		ymax = 0.3,
		xmax = 300,
		xmin =1,
		ymin=0
	]
	\addplot+[red,solid,mark=none] table[x=f,y=vrf4] {simulationVel.txt}; \addlegendentry{\small RF}
	\addplot+[green,solid,mark=none] table[x=f,y=vci4] {simulationVel.txt}; \addlegendentry{\small CI}
	\addplot+[blue,solid,mark=none] table[x=f,y=vcf4] {simulationVel.txt}; \addlegendentry{\small CKF}
	\end{axis}
\end{tikzpicture}
}

\end{tabular}
\caption{ {Comparison of the four methods}.   The Naive Fusion estimator quickly diverges, whereas the Robust Fusion and Covariance Intersection do not. 
Clearly the Robust Fusion estimator is more accurate than the Covariance Intersection and produces less noisy estimates due to its less conservative nature.
The steady state error covariance estimate of the Covariance Intersection is much larger than the actual error covariance making the estimator more susceptible to measurement noise.
}
\end{center}
\end{figure*}
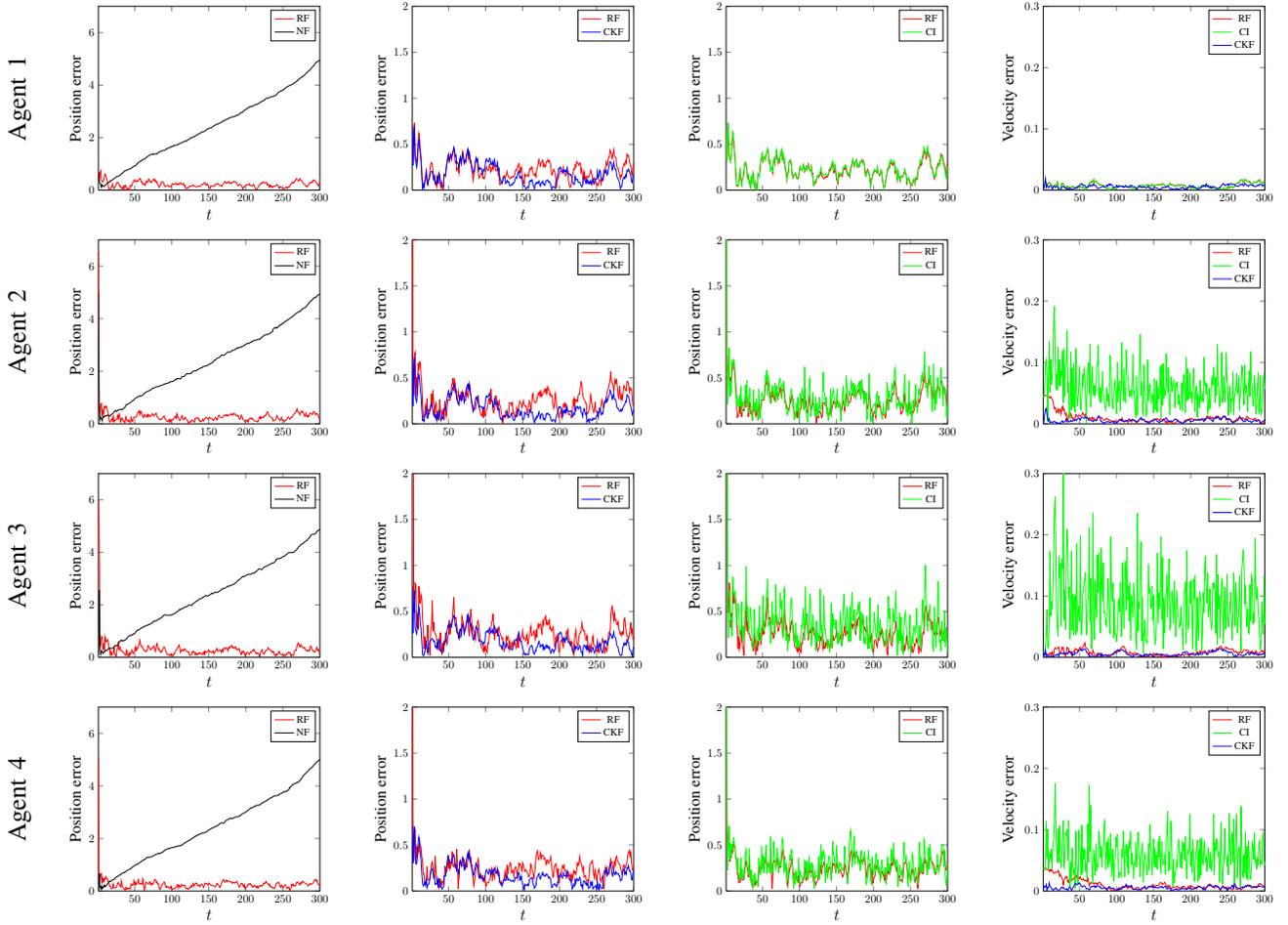




\begin{table}[ht!]
\caption{Position errors} 
\label{tab:steady_state_errors}
\centering 
\begin{tabular}{|c|c|c|c|} 
\hline 
Agent $\#$ & CKF & RF  & CI\\ [0.2ex] 
\hline  \hline
 1 &   $0.174 \pm 0.107$ m &  $0.222 \pm  0.091$  m   & $0.230 \pm 0.093$ m \\
 2 &  $0.166 \pm  0.098$ m &  $0.248 \pm  0.108$  m  &  $0.286 \pm 0.145$ m \\
 3 &  $0.170 \pm 0.099$  m&  $0.248 \pm 0.114 $   m  & $0.343 \pm 0.187$ m \\
 4 &  $0.161 \pm 0.085$  m &  $0.238 \pm 0.090 $  m  & $0.285 \pm  0.128$ m \\
[0.2ex] 
\hline 
\end{tabular}
\end{table}

\section{APPENDIX}

\subsection{Proof of lemma \ref{lem:simplefusion}}
\label{app:proofsimplefusion}

 First, it is easy to see that if $E[\widetilde{x}] = E[\widetilde{y}] =\mean{z}$  then $E[\widetilde{z}]=\mean{z}$.
 Now, one has to show that if $\Sxx \succeq \oSxx$ and $\Syy \succeq \oSyy$ then $ \trace(\Szz^\star) \geq \trace(\oSzz)$.
 We have that $\Szz^\star-\oSzz $ is equal to
 \begin{align*}
  & \ \  \ \ (I-K^\star)(\Sxx-\oSxx)(I-K^{\star T}) + K^\star(\Syy-\oSyy)K^{\star T} \\
  &+ K^\star (\Sxy^{\star T} - \oSxy^T)(I-K^{\star T}) + (I-K^\star) (\Sxy^{\star} - \oSxy)K^{\star T} \\
  & \succeq  K^\star (\Sxy^{\star T} - \oSxy^T)(I-K^{\star T}) + (I-K^\star) (\Sxy^{\star} - \oSxy)K^{\star T}
 \end{align*}
 since $\Sxx\succeq \oSxx$ and $\Syy\succeq \oSyy$.
Since trace is $\posdefcone{n}$-nondecreasing, we get
\[
\trace(\Szz^\star-\oSzz) \geq 2 \trace \left(K^T (I-K) (\Sxy^{\star} - \oSxy) \right) \geq 0
\]
since $\trace \left(K^{\star T} (I-K^\star) \Sxy^{\star} \right) \geq  \trace \left(K^{\star T} (I-K^\star) \Sxy \right)$ for every $\Sxy$  satisfying \eqref{eq:validcor} due to optimality of $\Sxy^\star$. 
Verifying that $\oSxy$ satisfies \eqref{eq:validcor} is straightforward.
\QEDA

\subsection{Formulas for computing the Newton steps}
\label{app:newton}

First of all, the differential of $f_1(Q)$ at the direction of $\Delta Q$ is given by
 \begin{equation}
  Df_1(Q)[\Delta Q] =  \Syy^{-1/2} \left( \Delta Q^T  \Sxx^{-1} Q + Q^T  \Sxx^{-1} \Delta Q \right)\Syy^{-1/2} 
 \end{equation}

For small $\Delta X$, we have the first order approximation \cite{boyd2004convex}:
\begin{equation}
  \log \det (X+\Delta X) \approx \log\det(X) + \trace (X^{-1} \Delta X)
 \end{equation}
   and thus, using the chain rule, we obtain
\begin{equation}
\nabla_Q \log \det ( - f_1(Q)) =  2 \Sxx^{-1} Q   \Syy^{-1/2} f_1(Q)^{-1} \Syy^{-1/2} 
 \end{equation}
Moreover, we have
\begin{equation}
\begin{split}\nabla_X f(X ,Q ) = &2 \bigl( 
\begin{bmatrix}
C & D \\
\end{bmatrix}  
  \begin{bmatrix}
\Sxx & Q \\ Q^T & \Syy
\end{bmatrix} 
\begin{bmatrix}
C^T \\ D^T \\
\end{bmatrix}
+\Sigma_\eta 
  \bigr) X \\
   -&2 (C\Sxx+DQ^T)
  \end{split}
\end{equation}  
 and 
\begin{equation}  
\nabla_Q f(X ,Q ) = 2 (C^T XX^T D-X^TD) 
\end{equation}

 Let $g_1(Q) \doteq \Syy^{-1/2} f_1(Q)^{-1} \Syy^{-1/2}$. Using
 \begin{equation}
 (X+\Delta X)^{-1} \approx X^{-1} -  X^{-1} \Delta X X^{-1} 
 \end{equation}
 for small $\Delta X$ and the chain rule, we obtain the following system of linear equations for $(\Delta X_{nt},\Delta Q_{nt})$:
 
 \begin{equation}
 \begin{split}
  &2t \bigl( 
\begin{bmatrix}
C & D \\
\end{bmatrix}  
  \begin{bmatrix}
\Sxx & Q \\ Q^T & \Syy
\end{bmatrix} 
\begin{bmatrix}
C^T \\ D^T \\
\end{bmatrix}
+\Sigma_\eta 
  \bigr) \Delta X_{nt} \\
  + &2t (   C \Delta Q_{nt} D^TX - D \Delta Q_{nt}^T (I-C^T X)) = - \nabla_X f_t(X ,Q ) 
   \end{split}
   \end{equation}
  and
  \begin{equation}
  \begin{split}
  & 2t (C^T \Delta X_{nt}X^T D -(I- C^T X) \Delta X_{nt}^T D) \\
   - &2   \Sxx^{-1}   Q   g_1(Q)  \bigl( \ \Delta Q_{nt}^T  \Sxx^{-1} Q + Q^T  \Sxx^{-1} \Delta Q_{nt} \bigr)         g_1(Q)  \\
    +  &2 \Sxx^{-1} \Delta Q_{nt}   g_1(Q) = - \nabla_Q f_t(X ,Q )
 \end{split}
   \end{equation}

\bibliographystyle{plain}
\bibliography{egbib}

\end{document}